\theoremstyle{plain}
\newtheorem{thm}{Theorem}[section]
\newtheorem{cor}[thm]{Corollary}
\newtheorem{prop}[thm]{Proposition}
\theoremstyle{definition}
\theoremstyle{remark}
\begin{document}

\title[Comparison of Stopping Times by Nested CMC]{Faster Comparison of Stopping Times by Nested Conditional Monte Carlo}

\author{Fabian Dickmann}
\address{Faculty of Mathematics\\ Duisburg-Essen University\\
Thea-Leymann-Str. 9\\ 45127 Essen\\ Germany}
\email{fabian.dickmann@uni-due.de} 
\thanks{Both authors gratefully acknowledge financial support by the Deutsche Forschungsgemeinschaft through SPP 1324.}

\author{Nikolaus Schweizer}
\address{Department of Mathematics\\ Saarland University\\
Postfach 151150\\ 66041 Saarbr\"ucken\\ Germany}
\email{schweizer@math.uni-sb.de }
\date{January 2014}





\begin{abstract}
We show that deliberately introducing a nested simulation stage can lead to significant variance reductions when comparing 
two stopping times by Monte Carlo. We derive the optimal number of nested simulations and prove that the algorithm is remarkably
robust to misspecifications of this number. The method is applied to several problems related to Bermudan/American options. 
In these applications, our method allows to substantially increase the efficiency of other variance reduction techniques, namely, Quasi-Control Variates and 
Multilevel Monte Carlo.
\end{abstract}
\keywords{American Options, Bermudan Options, Branching, Importance Sampling, Multilevel Monte Carlo, Nested Simulation, Optimal Stopping, Splitting, Variance Reduction}

 \maketitle

 \section{Introduction}
 
In this paper, we propose a novel method for efficiently comparing the performance of different stopping times, i.e. we are interested in computing $\Delta=E[X_{\tau^A}-X_{\tau^B}]$ where
 $X$ is a stochastic process in discrete time and $ \tau^A$ and $\tau^B$ are two stopping times. A simple Monte Carlo algorithm for this problem consists of simulating $N$ trajectories 
 of $X$ until both stopping times have occurred and then to average over the resulting $N$ realizations of $X_{\tau^A}-X_{\tau^B}$. If $\tau^A$ and $\tau^B$ are similar, e.g., because they are
 two approximations of the same intractable stopping time, or solutions to two similar problems, this method tends to be inefficient, because sizeable contributions to $\Delta$ come only from
 the few regions in state space where $\tau^A$ and $\tau^B$ disagree.
 
 Instead, we write 
 \begin{equation}\label{twostage}
 \Delta=E[ E[ X_{\tau^A}-X_{\tau^B} |\mathcal{F}_{\tau^\wedge} ]],\;\;\;  \text{ where }\;\;\; \tau^\wedge = \min(\tau^A,\tau^B),
 \end{equation}
where $\mathcal{F}_{ \tau^\wedge}$ denotes the information generated until the first of the stopping times occurs and propose the following two-stage simulation procedure: Simulate $N$
 trajectories of $X$ until the first stopping time occurs, i.e., until $\tau^\wedge$. Then simulate $R$ conditionally independent copies of each of the $N$ trajectories until 
 the second stopping time $\tau^\vee =\max(\tau^A,\tau^B)$ and estimate $\Delta$ by the mean of the $R\cdot N$ realizations of $X_{\tau^A}-X_{\tau^B}$. The resulting estimator can be interpreted as
  estimating first for each of the $N$ initial trajectories the inner conditional expectation in \eqref{twostage} by the mean over the $R$ replications of that trajectory, and then averaging 
 over the $N$ initial trajectories to estimate the outer expectation.
 
 The idea of using subsimulations to estimate an inner conditional expectation relates our approach to the literature on Nested Simulation \cite{gordy2010nested, broadie2011efficient}. 
 In the applications considered there, there is a non-linear dependence on the inner conditional expectation so that the inner simulations are indispensable
 -- and are generally regarded as an unavoidable burden. From this perspective, it is interesting to see that in our numerical examples, where inner simulations are introduced 
 deliberately as a variance reduction technique, the estimated optimal numbers of inner paths, do not differ much from what is typically used in these applications,  e.g. $R=100$.
Variance reduction by deliberately inserting a conditional expectation is a common technique if these conditional expectations are available in closed form. This classical method is known
as Rao-Blackwellization or Conditional Monte Carlo \cite{boyle1997monte, asmussen1997simulation}. It is usually not applicable in our setting since closed-form expressions for expectations of stopped processes are rare.
Since our method mimics Conditional Monte Carlo by Nested Simulation, we refer to it as \textit{Nested Conditional Monte Carlo}.

We apply our method to three problems related to Bermudan option pricing.\footnote{ We refer to the options as Bermudan, since the sets of exercise times are finite. We could also interpret them as a time discretizations of American options.}
In the first application, there is a genuine interest in comparing different stopping times: We consider the impact of parameter uncertainty on the performance of estimated optimal exercise strategies.
In the two other applications, the true quantity of interest is  $E[X_{\tau^A}]$ while $X_{\tau^B}$ serves as a control variate. Our method is used
to enhance the efficiency of this control variate. For two variance reduction methods of this type, a Quasi-Control Variate as introduced in \cite{ES}
and the Multilevel algorithm of \cite{belomestny2013pricing}, we demonstrate that Nested Conditional Monte Carlo can lead to sizeable efficiency gains. In fact, for both algorithms
the additional variance reduction through incorporating our method is at least as large as the variance reduction of the original algorithm.


  In a sense, our algorithm closely resembles the splitting algorithms for rare event simulation studied, e.g., in \cite{villen2002analysis, glasserman1999multilevel}.
 In the applications considered in this literature (e.g. barrier option pricing or estimating the probability of large losses),
 the rare event typically consists of $X$
 taking exceptionally large  or small values. Thus, the trigger events for replicating a trajectory are chosen as the hitting times of some threshold value of $X$. In this way, computational effort can 
be allocated efficiently to the regions  where it is needed the most.

 Our rare event, a large discrepancy between $X_{\tau^A}$ and $X_{\tau^B}$, does not have such a nice structure, i.e., it is not easy to connect it a priori to particular values of $X$ which
 might serve as a trigger for replications.
 Our key observation is that this type of knowledge is not necessary here: We simply start replications at an event, the first stopping time, which occurs once on \textit{every} trajectory. 
 An efficient allocation of computational effort to critical trajectories occurs endogenously by the following reasoning: $X_{\tau^A}-X_{\tau^B}$ is expensive to simulate
 if $\tau^A$ and $\tau^B$ lie far apart. Yet those cases where $\tau^A$ and $\tau^B$ lie far apart also carry a substantial probability for large values of $X_{\tau^A}-X_{\tau^B}$.
 In addition, in many cases the difference $X_{\tau^A}-X_{\tau^B}$ will not depend strongly on what happens before $\tau^\wedge$. Thus, we gain efficiency by shifting computational effort 
 from the time interval $[0, \tau^\wedge]$ to the interval $[\tau^{\wedge},\tau^\vee]$ (which is typically much shorter). 
 To sum up, we use splitting with the objective of identifying important regions in time rather than in space -- although, of course,  the two cannot fully be disentangled.
 On a more abstract level, our results demonstrate that ideas from rare event simulation  can be used to boost the efficiency of control variates even when dealing with
 ``ordinary'' events.
 As we will see in the numerical examples, this gives a highly efficient Monte Carlo algorithm which has only one free parameter, the number of replications $R$.
 
 A particular advantage of splitting methods such as ours  is that they result in unbiased estimators. This sets them apart from related algorithms such as Importance Sampling
 and particle methods (see the survey \cite{carmonaintroduction})  which also aim at an endogenous and efficient distribution of simulation costs.
 Unbiasedness is of particular importance in option pricing applications, where one wishes to calculate estimators which are known to have a positive or negative 
 bias in order to construct confidence intervals, see e.g. \cite{andersen2004primal}.

 The paper is organized as follows: Section \ref{secAlgo} introduces the setting and the algorithm and derives a formula for the variance of the estimator.
 Section \ref{secCali} characterizes situations where our algorithm leads to significant improvements over simple Monte Carlo and derives a formula for the optimal 
 number of replications $R$. Moreover, we prove that the performance of the estimator is rather insensitive to moderate misspecifications of $R$: As long as $R$ is, say, within 
 20\% of the optimal value we achieve more than 99\% of the optimal variance reduction. Section \ref{secApp} demonstrates the algorithm's efficiency in the three option pricing applications
 sketched above. Section \ref{secCon} concludes. All proofs are in the appendix.

 \section{The Algorithm}\label{secAlgo}
 Consider a square-integrable, adapted, real-valued stochastic process $(X_j)_j$ on a complete filtered probability space $(\Omega, \mathcal{F}, (\mathcal{F}_j)_{j=0}^J,P )$ over the discrete time
 horizon $\{0,1,\ldots,J\}$. There are two stopping times $\tau^A$ and $\tau^B$ and we are interested in computing
 \[
\Delta = E[X_{\tau^A}-X_{\tau^B}]
 \]
 by a Monte Carlo approach. We define the stopping times $\tau^\wedge= \min(\tau^A, \tau^B)$ and $\tau^\vee= \max(\tau^A, \tau^B)$ and the random variable $S=\textnormal{sign}(\tau^A-\tau^B)$ and note that 
 \[
 \tau^A-\tau^B = S(\tau^\vee-\tau^\wedge), \;\;\;\textnormal{that }  \;\;\;   X_{\tau^A}-X_{\tau^B} = S(X_{\tau^\vee}-X_{\tau^\wedge})
 \]
and that $S$ is observable at time $\tau^\wedge$. For our Monte Carlo approach, we assume that (conditionally) independent copies of random variables are available as needed on our probability space
and propose the following two-stage simulation algorithm which is determined by two integer-valued, positive parameters $N$ and $R$:
\begin{itemize}
  \item[A1.] Simulate independent copies $X_0^{(i)},\ldots, X_{\tau^{\wedge, (i)}}^{(i)}$  of $X_0,\ldots, X_{\tau^{\wedge}}$ for $i=1,\ldots N$.  
  Denote by $\mathcal{F}^{\tau^\wedge,(i)}$ the information generated along the $i^{th}$
  trajectory and by $S^{(i)}$ the associated copy of $S$.
  \item[A2.] Conditionally on $\mathcal{F}^{\tau^\wedge,(i)}$ simulate for each $i$ with $S^{(i)}\neq 0$ and for $r=1,\ldots, R$ copies 
  $X^{(i,r)}_{\tau^{\wedge, (i)}+1},\ldots, X^{(i,r)}_{\tau^{\vee, (i,r)}}$ of $X_{\tau^{\wedge}+1},\ldots, X_{\tau^{\vee}}$
  which are independent across  the $i$ and conditionally independent across the $r$. If $S^{(i)}=0$ and thus $\tau^{\vee,(i,r)}=\tau^{\wedge,(i)}$ set $X^{(i,r)}_{\tau^{\vee, (i,r)}}= X_{\tau^{\wedge, (i)}}^{(i)}$.
  \item[A3.] Estimate $\Delta$ by
  \begin{equation}
  \Delta^{(N,R)}=\frac{1}{N} \sum_{i=1}^N \frac{1}{R} \sum_{r=1}^R S^{(i)} (X^{(i,r)}_{\tau^{\vee,(i,r)}} - X^{(i)}_{\tau^{\wedge,(i)}})
    \end{equation}
\end{itemize}

In short, we simulate $N$ independent trajectories of $X$ until the first stopping time $\tau^\wedge$. From $\tau^\wedge$ on, we simulate $R$ copies of each trajectory until the second 
stopping time $\tau^\vee$. $\Delta$ is estimated by the mean $\Delta^{(N,R)}$ of the $R \cdot N$ (dependent) samples of $X_{\tau^A}-X_{\tau^B}$ obtained in this way.

Figure 1 illustrates the simulation procedure for an example where the two stopping times are given in the form of exercise boundaries, the blue and red curves in the figure.
Whenever the process $X$ crosses either boundary, the stopping time occurs. From the small amounts of red and blue in the picture, we observe that the subsampling in Step A2 only has to be carried out rarely -- when the 
two stopping times differ, implying that the additional 
simulation costs from subsampling tends to be small. 

As will become clear below, $\Delta^{(N,R)}$ can be interpreted as an approximation of a so-called Conditional Monte Carlo estimator where conditional expectations are replaced
by nested simulations (subsamples). We thus refer to the algorithm as the \textit{Nested Conditional Monte Carlo} algorithm.
The following proposition shows that $\Delta^{(N,R)}$ is unbiased and gives an expression for its variance.
\begin{prop}\label{propUB}
 We have $E[\Delta^{(N,R)}]=\Delta$ and
 \[
 \textnormal{Var}(\Delta^{(N,R)})=\frac{v_1}{N} + \frac{v_2}{R N}
 \]
 where 
 \[
 v_1 = \textnormal{Var}(E[X_{\tau^A}-X_{\tau^B}| \mathcal{F}_{\tau^{\wedge}}]) \;\; \textnormal{ and }\;\; v_2 = E[\textnormal{Var}(X_{\tau^A}-X_{\tau^B}| \mathcal{F}_{\tau^{\wedge}})].
 \]
\end{prop}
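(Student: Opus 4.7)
The plan is to prove unbiasedness and compute the variance by repeated conditioning on the first-stage information $\mathcal{G}=\sigma(\mathcal{F}^{\tau^\wedge,(1)},\ldots,\mathcal{F}^{\tau^\wedge,(N)})$. The central structural fact I would establish first is that, conditionally on $\mathcal{F}^{\tau^\wedge,(i)}$, the summand $Y^{(i,r)}:=S^{(i)}(X^{(i,r)}_{\tau^{\vee,(i,r)}}-X^{(i)}_{\tau^{\wedge,(i)}})$ has the same conditional distribution as $X_{\tau^A}-X_{\tau^B}$ given $\mathcal{F}_{\tau^\wedge}$ (applied to the first-stage trajectory). This rests on two observations: $S^{(i)}$ is $\mathcal{F}^{\tau^\wedge,(i)}$-measurable, and the identity $X_{\tau^A}-X_{\tau^B}=S(X_{\tau^\vee}-X_{\tau^\wedge})$ stated before the proposition reduces the whole quantity to a function of the trajectory from $\tau^\wedge$ to $\tau^\vee$, which by construction of step A2 is a conditionally independent copy. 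The degenerate case $S^{(i)}=0$ is handled by the convention imposed in A2.

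Given this, unbiasedness follows quickly by the tower property: taking conditional expectation in each summand yields $E[Y^{(i,r)}\mid\mathcal{F}^{\tau^\wedge,(i)}]=E[X_{\tau^A}-X_{\tau^B}\mid\mathcal{F}_{\tau^\wedge}]^{(i)}$, and a further expectation produces $\Delta$. For the variance I would apply the law of total variance with respect to $\mathcal{G}$. For the ``variance of the conditional mean'' piece, note that conditional on $\mathcal{F}^{\tau^\wedge,(i)}$ the inner average has mean $E[X_{\tau^A}-X_{\tau^B}\mid\mathcal{F}_{\tau^\wedge}]^{(i)}$, and since the $N$ first-stage trajectories are iid this term contributes $v_1/N$. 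For the ``expected conditional variance'' piece, I use the conditional independence of the $R$ inner copies for fixed $i$ (contributing a factor $1/R$) and the mutual independence across $i$ of the second-stage blocks (so the conditional variance of the outer average is $1/N^2$ times the sum of per-trajectory variances); taking expectations and using that each trajectory contributes $v_2$ in expectation yields $v_2/(NR)$.

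The only genuinely delicate point is justifying the distributional identity in the first paragraph with full rigor: one must argue that the conditionally independent copies simulated in A2 really reproduce the law of $X_{\tau^\wedge+1},\ldots,X_{\tau^\vee}$ given $\mathcal{F}_{\tau^\wedge}$, including proper behavior of the random upper index $\tau^{\vee,(i,r)}$. Once this is granted, as it is by the construction assumed in the algorithm, the computation is routine variance-decomposition bookkeeping. Square-integrability of $X$ together with $\tau^\vee\le J$ ensures that all expectations and variances involved are finite, so no additional moment hypotheses are needed.
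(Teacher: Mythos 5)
Your proof is correct and uses essentially the same argument as the paper: unbiasedness via the tower property and the variance via the conditional variance decomposition combined with independence across the $N$ first-stage trajectories and conditional i.i.d.-ness of the $R$ replicates. The only (immaterial) difference is organizational — the paper first reduces to a single trajectory using the i.i.d. structure of the outer sum and then conditions on $\mathcal{F}^{\tau^\wedge,(1)}$, whereas you condition on all first-stage information at once and then exploit the independence structure inside each piece.
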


The basic motivation for the algorithm is as follows: If $\tau^A$ and $\tau^B$ are not far apart, Step A2 of the algorithm is much cheaper computationally than Step A1. If they happen to coincide, Step A2 is for free. Therefore, large values of $R$ are comparatively cheap.
Moreover, if  circumstances are favorable, namely, if the bulk of the variance in $X_{\tau^A}-X_{\tau^B}$ actually comes from what happens between $\tau^A$ and $\tau^B$, i.e., if $v_1 \ll v_2$, the estimator will behave like an estimator with $R \cdot N$ rather than $N$
samples.

\begin{figure}
\centerline{\rotatebox[origin=c]{0}{
\includegraphics[scale=0.45]{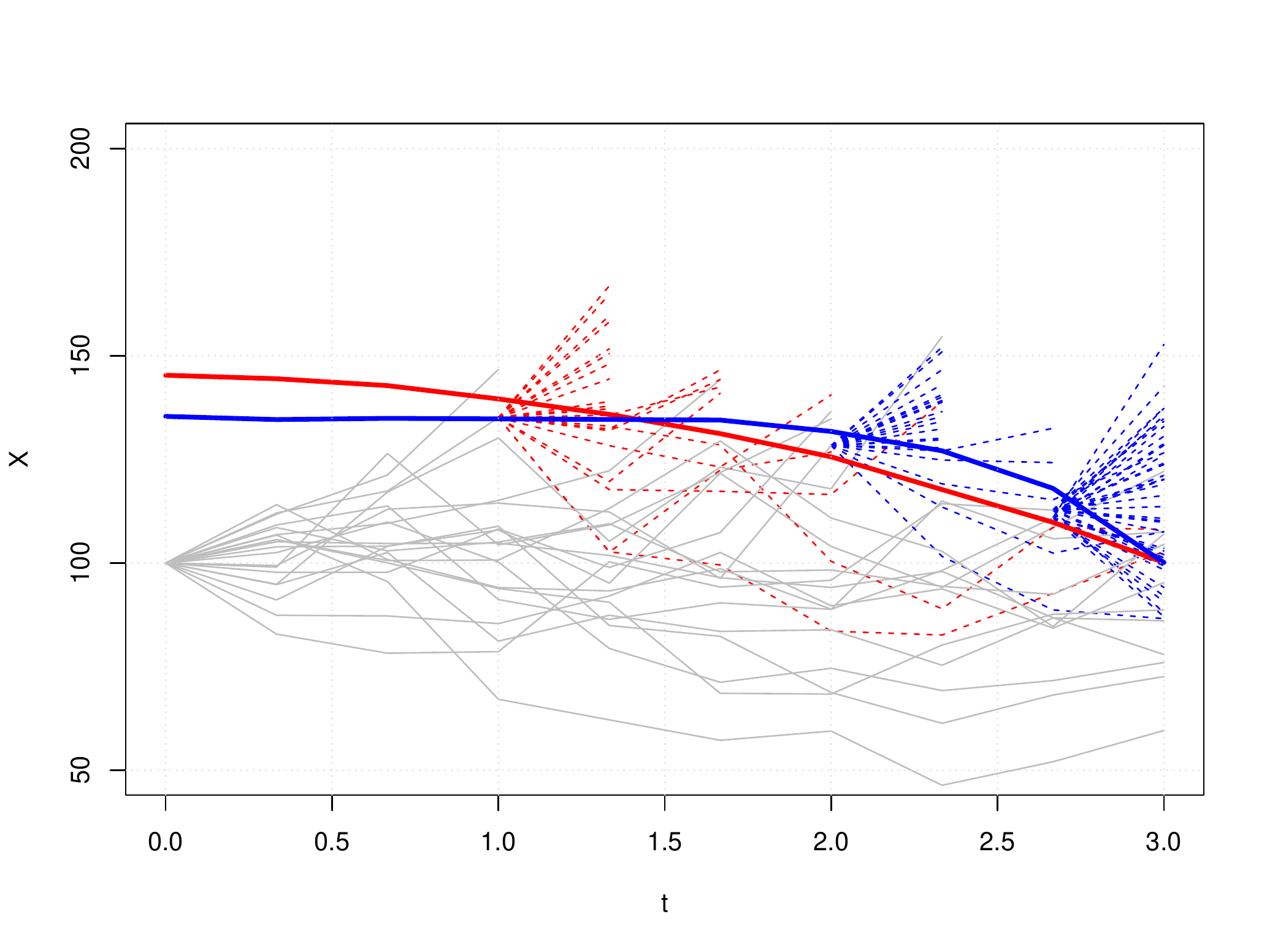}}}
\caption{Simulated trajectories stopped and replicated at two exercise boundaries.}
\end{figure}

We close this section by pointing out that $\Delta^{(N,R)}$ can be understood as an interpolation between two well-known Monte Carlo algorithms: $R=1$ corresponds to a simple Monte Carlo estimator
and $R=\infty$ corresponds to Conditional Monte Carlo.
For $R=1$, the algorithm collapses to a simple Monte Carlo estimator $\Delta^{MC}=\Delta^{(N,1)}$ of $X_{\tau^A}-X_{\tau^B}$ along $N$ sample trajectories of $X_0,\ldots, X_{\tau^\vee}$. Moreover, we have
\[
 \textnormal{Var}(\Delta^{ MC})=\frac{\textnormal{Var}(X_{\tau^A}-X_{\tau^B})}{N} = \frac{v_1+v_2}{N}.
\]
 The final equality is the well-known conditional variance decomposition formula. The inner sum in the estimator $\Delta^{(N,R)}$,
\[
\frac{1}{R} \sum_{r=1}^R S^{(i)} (X^{(i,r)}_{\tau^{\vee,(i,r)}} - X^{(i)}_{\tau^{\wedge,(i)}}),
\]
can be interpreted as a Monte Carlo estimator for
\[
D^{(i)} = E[S^{(i)}\,(X^{(i,1)}_{\tau^{\vee,(i,1)}} - X^{(i)}_{\tau^{\wedge,(i)}}) | \mathcal{F}_{{\tau^{\wedge,(i)}}} ].
\]
which is exact in the limit $R \rightarrow  \infty$. The limiting Monte Carlo estimator
\[
\Delta^{ CMC}=\frac{1}{N} \sum_{i=1}^N D^{(i)}
 \]
is the so-called Conditional Monte Carlo (CMC) estimator  for $\Delta$. 
In the applications we consider, the conditional expectation in $D^{(i)}$ typically cannot be computed explicitly and thus the estimator $\Delta^{ CMC}=\Delta^{(N,\infty)}$
is purely of theoretical interest.
The variance of $\Delta^{CMC}$ is given by
\[
 \textnormal{Var}(\Delta^{CMC})=\frac{v_1}{N},
\]
thus reducing the variance by a factor $v_1/(v_1+v_2)$ compared to $\Delta^{MC}$. 
By employing nested simulation ($R>1$)  to approximate the conditional expectations, we construct implementable estimators which achieve at least part of this variance reduction.

 \section{Calibrating the Algorithm}\label{secCali}
 In the previous section, we saw that $\Delta^{(N,R)}$ achieves a guaranteed variance reduction compared to the simple Monte Carlo estimator $\Delta^{(N,1)}$. 
 This comparison is however unfair since increasing the number of replications $R$ leads to higher computational costs.  
In this section, we thus address the following set of closely related questions: Given a fixed computational budget, when is it favorable to implement the estimator $\Delta^{(N,R)}$ 
 with $R>1$? How can we determine the optimal value of $R$? How much do we gain compared to the simple Monte Carlo method with $R=1$? How sensitive is the algorithm's performance to
 miscalibration?
 
 It turns out that the answers to all these questions crucially
 depend on two natural conditions: Implementing $\Delta^{(N,R)}$ with $R>1$ can lead to drastic computational efficiency gains 1) if the theoretical Conditional Monte Carlo estimator $\Delta^{ CMC}$ 
would lead to a substantial variance reduction, and 2) if generating samples of $X_{\tau^{\vee}} - X_{\tau^{\wedge}}$ conditionally on $X_{\tau^{\wedge}}$ is cheaper than sampling copies of $X_{\tau^{\wedge}}$,
i.e., if the cost of a single sample is smaller in Step A2 of the algorithm than in Step A1. If $\tau^A$ and $\tau^B$ are sufficiently similar, 
the second condition is likely to hold, i.e., the difference between the stopping times will be much smaller in expectation than their minimum. 
 
 Observe next that the computational costs of implementing the estimator for fixed $N$ and $R$ is itself random since the number of time steps that have to be simulated depends
 on the realizations of the stopping times. This is the case even for the simple Monte Carlo estimator with $R=1$. The following analysis is thus based on expected computational costs.
 
 Denote by $\rho_1$ the expected computational cost of simulating a realization of $X_{\tau^{\wedge}}$ in Step A1 of the algorithm. $\rho_1$ takes into account the expected length   
 $E[\tau^{\wedge}]$ of a path and the costs of evaluating both stopping times along that path. 
 Denote by $\rho_2$ the expected computational cost of simulating a realization of $X_{\tau^{\vee}}-X_{\tau^{\wedge}}$ for given $X_{\tau^{\wedge}}$ in Step A2 of the algorithm.
 $\rho_2$ takes into account the expected length   
 $E[\tau^{\vee}-\tau^{\wedge}]$ of such a path and the costs of evaluating either of the two stopping times along that path. The expected computational cost of implementing the estimator
 with parameters $N$ and $R$ is thus given by
 \[
c(N,R)= N \rho_1 + N R \rho_2.
 \]
 For simplicity, we assume that $v_1$, $v_2$, $\rho_1$ and $\rho_2$ are strictly positive.
 
The next proposition characterizes how to optimally choose $N$ and $R$ for a given computational budget $C$. 
In particular, we derive an expression for the optimizer $R^*$ which is independent of the
overall budget, showing that in relative terms the optimal allocation of computational costs between Steps A1 and A2 is independent of $C$.
For this reason, the question of calibrating the algorithm is basically reduced to finding a good choice of $R$.
For the moment, we ignore the integer-constraints on $N$ and $R$. We
do however take into account that in order to obtain an implementable algorithm we must have $R \geq 1$. By identifying the situations where $R^* >1$ we identify those cases
where our Nested Conditional Monte Carlo algorithm is more efficient than simple Monte Carlo.

\begin{prop}\label{propRstern}
For any $C>0$, the solution $(N^*, R^*)$ to 
\[
\min_{N,R} \textnormal{Var}(\Delta^{(N,R)})\;\; \;\;\textnormal{s.t.} \;\;\;\;c(N,R)\leq C,\; R\geq 1,\; N\geq 0
 \]
 is given as follows: If 
 \begin{equation}\label{cond}
  \frac{\rho_1}{\rho_2} \frac{v_2}{v_1} > 1
 \end{equation}
then 
\[
R^* =\sqrt{\frac{\rho_1}{\rho_2} \frac{v_2}{v_1}} \;\;\text{ and} \;\;N^*= \frac{C}{\rho_1+\rho_2 R^*}.
\]
 If condition \eqref{cond} is violated, the optimal choice is
\[
R^* =1 \;\;\text{ and} \;\;N^*= \frac{C}{\rho_1+\rho_2 }.
\]
\end{prop}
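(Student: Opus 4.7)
The plan is to reduce the two-variable constrained optimization to a one-variable problem by using the budget constraint, then minimize in $R$ by elementary calculus and split into cases according to whether the unconstrained interior minimizer satisfies $R \geq 1$.

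First I would observe that $\mathrm{Var}(\Delta^{(N,R)}) = v_1/N + v_2/(RN)$ is strictly decreasing in $N$ for fixed $R$ (since $v_1, v_2 > 0$), so at any optimum the cost constraint must bind: $N\rho_1 + NR\rho_2 = C$. Solving gives $N = C/(\rho_1 + R\rho_2)$, and substituting into the variance yields
\[
f(R) := \mathrm{Var}(\Delta^{(N,R)}) = \frac{(\rho_1 + R\rho_2)(v_1 + v_2/R)}{C} = \frac{1}{C}\Bigl(\rho_1 v_1 + \rho_2 v_2 + R\rho_2 v_1 + \frac{\rho_1 v_2}{R}\Bigr).
\]
This reduces the problem to minimizing $f$ over $R \geq 1$.

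Next I would differentiate: $f'(R) = (1/C)(\rho_2 v_1 - \rho_1 v_2/R^2)$ and $f''(R) = 2\rho_1 v_2/(C R^3) > 0$, so $f$ is strictly convex on $(0,\infty)$. The unique unconstrained critical point is $\tilde R = \sqrt{\rho_1 v_2/(\rho_2 v_1)}$, which by convexity is the global minimizer on $(0,\infty)$. Splitting into cases: if condition \eqref{cond} holds, then $\tilde R > 1$, so $\tilde R$ lies in the feasible region and $R^* = \tilde R$, giving the claimed formulas for $R^*$ and $N^*$. If condition \eqref{cond} fails, then $\tilde R \leq 1$, so $f$ is nondecreasing on $[1,\infty)$ and the constrained minimum is attained at the boundary $R^* = 1$, with $N^* = C/(\rho_1+\rho_2)$ from the binding constraint.

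There is no real obstacle here; the only thing to be careful about is the case split on whether the interior stationary point lies in the feasible set $\{R \geq 1\}$, and the observation that monotonicity of the variance in $N$ justifies treating the cost constraint as an equality (so that $N$ can be eliminated cleanly rather than having to carry both Lagrange multipliers). Convexity of $f$ then guarantees that the candidate point is truly a global minimum, not just a stationary one.
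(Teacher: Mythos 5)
Your proposal is correct and follows essentially the same route as the paper's proof: note the constraint binds because the variance decreases in $N$, eliminate $N$, and minimize the resulting convex function of $R$ over $[1,\infty)$ with a case split on whether the interior critical point $\sqrt{\rho_1 v_2/(\rho_2 v_1)}$ exceeds $1$. The only cosmetic difference is that the paper first rescales the objective by a monotone (affine) transformation to the form $\frac{v_1\rho_2}{v_2\rho_1}R + \frac{1}{R}$ before minimizing, whereas you differentiate the expanded expression directly; both yield the same conclusion.
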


From condition \eqref{cond} we can characterize the cases where the algorithm with $R>1$ is preferable to simple Monte Carlo as follows: \eqref{cond} is fulfilled if the cost of a single sample
is smaller in Step A2 than in Step A1, $\rho_2 < \rho_1$, and if a perfect CMC estimator would reduce the variance by at least a factor 2, $v_1< v_2$. If either of these conditions fails,
 \eqref{cond} can only hold if the other condition is satisfied sufficiently strongly. 
 
 If $N$ and $R$ are such that the budget constraint $c(N,R)=C$ holds with equality then the variance of $\Delta^{(N,R)}$ can be written as
 \[
 \textnormal{Var}(\Delta^{(N,R)})= \frac{V(R)}{C}\;\; \text{where} \;\;V(R)= (\rho_1 +\rho_2 R )\left(v_1 +  \frac{v_2}{R}\right).
 \]
 Therefore, in order to compare the resulting variance across different values of $R$ it suffices to compare the values $V(R)$. 
 The next proposition quantifies the gain from using our algorithm with $R^*$ subsamples rather than a simple
 Monte Carlo estimator: 
 
\begin{prop}\label{propgain}
If condition \eqref{cond} holds, the relative gain (variance reduction) from using Nested Conditional Monte Carlo with $R^*$ subsamples instead of simple Monte Carlo is given by
\[
\gamma^*= \frac{V(R^*)}{V(1)}=\frac{\left(\sqrt{\frac{v_1}{v_2}}+\sqrt{\frac{\rho_2}{\rho_1}}\,\right)^2}{(1+\frac{v_1}{v_2})(1+\frac{\rho_2}{\rho_1})}.
\]
Moreover,
\[
 \max\left(\frac{\rho_2}{\rho_1+\rho_2}, \frac{v_1}{v_1+v_2} \right) \leq \gamma^* \leq 4\, \max\left(\frac{\rho_2}{\rho_1+\rho_2}, \frac{v_1}{v_1+v_2} \right).
\]
\end{prop}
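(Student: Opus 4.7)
The plan is to derive the closed-form expression for $\gamma^*$ by a direct computation and then establish the two-sided bound via elementary inequalities after a convenient change of variables.

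First I would expand $V(R) = \rho_1 v_1 + \rho_2 v_2 + \rho_1 v_2/R + \rho_2 v_1 R$. The two $R$-dependent terms are positive with product $\rho_1 \rho_2 v_1 v_2$ independent of $R$, so by AM--GM their sum is minimized precisely at $R = R^* = \sqrt{\rho_1 v_2/(\rho_2 v_1)}$ and equal to $2\sqrt{\rho_1 \rho_2 v_1 v_2}$ there. Hence $V(R^*) = (\sqrt{\rho_1 v_1} + \sqrt{\rho_2 v_2})^2$, and dividing numerator and denominator of $V(R^*)/V(1)$ by $\rho_1 v_2$ reproduces the formula for $\gamma^*$ stated in the proposition.

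For the sandwich inequality, I would introduce the shorthand $a = v_1/v_2$ and $b = \rho_2/\rho_1$, so that condition \eqref{cond} reads $ab < 1$, $\gamma^* = (\sqrt{a}+\sqrt{b})^2/((1+a)(1+b))$, and the two bounding quantities become $a/(1+a) = v_1/(v_1+v_2)$ and $b/(1+b) = \rho_2/(\rho_1+\rho_2)$. Since $\gamma^*$ is symmetric in $(a,b)$, it is enough to prove $a/(1+a) \leq \gamma^* \leq 4a/(1+a)$ in the case $a \geq b$; the corresponding inequalities involving $b/(1+b)$ follow by interchanging $a$ and $b$.

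The lower bound, after multiplying through by $(1+a)(1+b)$, reduces to $2\sqrt{ab} + b \geq ab$, which is immediate because $ab \leq 1$ implies $\sqrt{ab} \geq ab$ and $b \geq 0$. The upper bound is even simpler: when $a \geq b$ we have $(\sqrt{a}+\sqrt{b})^2 \leq (2\sqrt{a})^2 = 4a$ and $(1+b) \geq 1$, hence $\gamma^* \leq 4a/((1+a)(1+b)) \leq 4a/(1+a)$. I do not foresee a genuine obstacle; the argument is short algebra, and the only place condition \eqref{cond} enters is the lower bound, where it is exactly the hypothesis $ab \leq 1$ used to sign $\sqrt{ab} - ab$, which is consistent with the fact that for $R^* = 1$ (the complementary case in Proposition \ref{propRstern}) one trivially has $\gamma^* = 1$.
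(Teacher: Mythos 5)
Your proposal is correct and follows essentially the same route as the paper: the closed form for $\gamma^*$ by direct substitution of $R^*$ into $V$, and the two-sided bound by reducing to elementary inequalities in the ratios $v_1/v_2$ and $\rho_2/\rho_1$, with the lower bound resting on exactly the same use of the hypothesis (in your notation $\sqrt{ab}\geq ab$ when $ab\leq 1$; in the paper's square-root variables, $ab\geq a^2b^2$). The only cosmetic differences are your AM--GM shortcut to $V(R^*)=\bigl(\sqrt{\rho_1 v_1}+\sqrt{\rho_2 v_2}\bigr)^2$ and your normalization to the case $a\geq b$, where the paper instead proves the bound against each term of the maximum by symmetry.
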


The lower bound on $\gamma^*$ shows that the variance parameters $v_i$ and the cost parameters $\rho_i$ independently place a bound on the variance reduction we can hope to achieve:
We can reduce the variance at most by a factor $v_1/(v_1+v_2)$, no matter how small $\rho_2$ is compared to $\rho_1$. The intuitive reason for this is that $\Delta^{(N,R)}$
can never beat the theoretical CMC estimator $\Delta^{CMC}$. Likewise, no matter how small the CMC-variance $v_1$ is compared to $v_2$, we can never gain more than the speed-up from 
concentrating our simulations on the interval from $\tau^\wedge$ and $\tau^\vee$ instead of the whole interval from $0$ to $\tau^\vee$. This speed-up is captured by the ratio between
$\rho_2$ and $\rho_1+\rho_2$. Since our upper bound on $\gamma^*$ is four times the lower bound, we see that the lower bound is never too far off.
To sum up, we can expect drastic variance reductions if (and only if) $v_1 \ll v_2$ and $\rho_2 \ll \rho_1$.

In practical implementations, we will not be able to work with exactly $R^*$ subsamples for at least two reasons: Since we will not know the parameters $v_1$, $v_2$, $\rho_1$ and $\rho_2$, 
these have to be estimated in pilot simulations. Moreover, $R$ has to be set to an integer value. Thus, it is important to make sure that the performance of the algorithm is not too
sensitive to the choice of $R$. The next proposition shows that this is indeed the case, giving an upper bound on the loss in variance reduction if we can only guarantee that $R$ lies in an interval
around $R^*$.

\begin{prop}\label{propAlpha}
Suppose that $R^*>1$ and $\alpha^{-1} R^* \leq  R \leq \alpha R^*$ for some $\alpha>1$. Then we have the following bound on the loss in variance reduction:
\[
\frac{V(R)}{V(R^*)} \leq \frac{1}{2}+\frac{\alpha+\alpha^{-1}}{4}, \; \; \;\text{and thus}  \; \;\; \frac{V(R)}{V(1)} \leq \left(\frac{1}{2}+\frac{\alpha+\alpha^{-1}}{4}\right) \, \gamma^*.
\]
 \end{prop}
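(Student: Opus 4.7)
The plan is to reduce the inequality to a one-parameter optimization after a convenient change of variables. First I would expand $V(R)=(\rho_1+\rho_2R)(v_1+v_2/R)=\rho_1v_1+\rho_2v_2+\rho_2v_1R+\rho_1v_2/R$, and plug in $R^{*}=\sqrt{\rho_1v_2/(\rho_2v_1)}$ from Proposition \ref{propRstern}. Both cross terms then equal $\sqrt{\rho_1\rho_2v_1v_2}$, so $V(R^{*})=\rho_1v_1+\rho_2v_2+2\sqrt{\rho_1\rho_2v_1v_2}$. Writing $R=\beta R^{*}$ with $\beta\in[\alpha^{-1},\alpha]$, the two cross terms become $\beta\sqrt{\rho_1\rho_2v_1v_2}$ and $\beta^{-1}\sqrt{\rho_1\rho_2v_1v_2}$, so
\[
\frac{V(R)}{V(R^{*})}=\frac{a+(\beta+\beta^{-1})b}{a+2b},\quad a:=\rho_1v_1+\rho_2v_2,\ b:=\sqrt{\rho_1\rho_2v_1v_2}.
\]

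Next I would use two elementary facts. Since $\beta+\beta^{-1}$ is convex on $(0,\infty)$, its maximum over $[\alpha^{-1},\alpha]$ is attained at an endpoint, giving $\beta+\beta^{-1}\le\alpha+\alpha^{-1}$. And by AM-GM, $a=\rho_1v_1+\rho_2v_2\ge 2\sqrt{\rho_1v_1\rho_2v_2}=2b$. So it remains to bound
\[
f(a):=\frac{a+(\alpha+\alpha^{-1})b}{a+2b}\quad\text{for }a\ge 2b.
\]
Differentiating, $f'(a)=\bigl(2-(\alpha+\alpha^{-1})\bigr)b/(a+2b)^2\le 0$ (since $\alpha+\alpha^{-1}\ge 2$), so $f$ is nonincreasing and therefore maximized at $a=2b$, yielding
\[
\frac{V(R)}{V(R^{*})}\le f(2b)=\frac{2b+(\alpha+\alpha^{-1})b}{4b}=\frac{1}{2}+\frac{\alpha+\alpha^{-1}}{4},
\]
which is the first claim. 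The second claim follows immediately, since $V(R^{*})/V(1)=\gamma^{*}$ by the definition of $\gamma^{*}$ in Proposition \ref{propgain}, so $V(R)/V(1)=(V(R)/V(R^{*}))\cdot\gamma^{*}$ is bounded by the same factor times $\gamma^{*}$.

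I do not expect a serious obstacle here: the computation of $V(R^{*})$ is routine, and the only subtlety is recognizing that both $\beta+\beta^{-1}\le\alpha+\alpha^{-1}$ (convexity at endpoints) and the worst-case parameter regime $a=2b$ (i.e.\ $\rho_1v_1=\rho_2v_2$) need to be pushed simultaneously to the boundary. The monotonicity argument for $f(a)$ uses in a crucial way that $\alpha+\alpha^{-1}\ge 2$, which holds for every $\alpha>1$, so nothing degenerates.
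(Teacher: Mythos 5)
Your proposal is correct and follows essentially the same route as the paper: the paper writes $V(R)=\rho_1v_1+\rho_2v_2+\rho_2v_1\bigl(R+{R^*}^2/R\bigr)$, reduces to the endpoint $R=\alpha R^*$ by convexity, and bounds the ratio $(\Gamma+\alpha+\alpha^{-1})/(\Gamma+2)$ using $\Gamma\ge 2$, where $\Gamma=(\rho_1v_1+\rho_2v_2)/(\rho_2v_1R^*)$ is exactly your $a/b$, and the step "replace $\Gamma$ by a smaller number" is exactly your monotonicity of $f$. The two arguments differ only in bookkeeping, so nothing further is needed.
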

 
This bound is fairly tight for realistic values of $\alpha$. For $\alpha=1.2$, implying that $R$ is misspecified by about 20\%, we are still within 1\% of the optimal
variance reduction. For $\alpha=2$, almost 90 \% of the optimal variance reduction are achieved. We thus conclude that even a crude attempt at optimizing the number of subsamples $R$
should lead to near-optimal results. 

A key observation in the proof of Proposition \ref{propAlpha} is the identity $V(\alpha R^*)=V(\alpha^{-1} R^*)$.
The next corollary collects some of its practical implications for the choice of $R$: If $R^*$ is significantly larger than $1$, then there is a wide interval of values for $R$
which give an improvement over simple Monte Carlo: Any value of $R$ which is smaller than the square of the optimum $R^*$ is better than $R=1$. 
Moreover, given a fixed computational budget it is always better to overestimate $R^*$ by a fixed amount, than to underestimate it by the same amount. 
Finally, rounding $R^*$ to the nearest integer can never produce an algorithm which is worse than simple Monte Carlo.

\begin{cor}\label{propR} Suppose condition \eqref{cond} holds, i.e., $R^* > 1$. Then the following assertion are true:
\item[(i)]
For every $R$ with $1 < R < {R^*}^2$ we have an improvement over simple Monte Carlo, $V(R)< V(1)$.
\item[(ii)] Let $r>0$ be such that $R^*-r \geq 1$. Then $V(R^*+r)< V(R^*-r)$.
\item[(iii)] Let $R^{\#}$ be the integer nearest to $R^*$. If $R^\#>1$, then $V(R^\#)<V(1)$.
\end{cor}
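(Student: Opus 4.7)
The plan is to base everything on two structural facts about the function $V(R)=(\rho_1+\rho_2 R)(v_1+v_2/R)$ restricted to $R>0$: (a) $V$ is strictly decreasing on $(0,R^*]$ and strictly increasing on $[R^*,\infty)$; (b) the symmetry $V(\alpha R^*)=V(\alpha^{-1}R^*)$ highlighted just before the corollary. Fact (a) follows from computing $V'(R)=\rho_2 v_1-\rho_1 v_2/R^2$, which changes sign exactly at $R=R^*=\sqrt{\rho_1 v_2/(\rho_2 v_1)}$. Fact (b) is a short algebraic verification (both sides equal $\rho_1 v_1+\rho_2 v_2+\sqrt{\rho_1\rho_2 v_1 v_2}(\alpha+\alpha^{-1})$), so it can simply be cited.

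For part (i), I split on whether $R\leq R^*$ or $R>R^*$. In the first case $1<R\leq R^*$, monotonicity on $(0,R^*]$ immediately gives $V(R)<V(1)$. In the second case, set $R'=(R^*)^2/R$; then $R=(R/R^*)\cdot R^*$ and by (b) we have $V(R)=V(R')$. The hypothesis $R<(R^*)^2$ together with $R>R^*$ forces $R'\in(1,R^*)$, so monotonicity again yields $V(R)=V(R')<V(1)$.

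For part (ii), apply (b) with $\alpha=(R^*+r)/R^*>1$ to obtain $V(R^*+r)=V((R^*)^2/(R^*+r))$. From the inequality $(R^*-r)(R^*+r)=(R^*)^2-r^2<(R^*)^2$ I get $R^*-r<(R^*)^2/(R^*+r)<R^*$; both points then lie in $(0,R^*]$, and strict monotonicity there gives $V(R^*+r)=V((R^*)^2/(R^*+r))<V(R^*-r)$.

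For part (iii), the only thing to check is that $R^\#$ falls inside the interval $(1,(R^*)^2)$ so that (i) applies (the boundary case $R^\#=R^*$ is trivial from (a)). If $R^\#<R^*$, monotonicity handles it directly since $R^\#\geq 2$. If $R^\#>R^*$, the nearest-integer condition gives $R^\#\leq R^*+\tfrac12$ and, combined with $R^\#\geq 2$, yields $R^*\geq\tfrac32$. It then suffices to verify $R^*+\tfrac12<(R^*)^2$, equivalently $(R^*)^2-R^*-\tfrac12>0$, which holds throughout $R^*\geq\tfrac32$ (value $\tfrac14$ at the endpoint, and the quadratic is increasing beyond it). Part (i) then concludes $V(R^\#)<V(1)$. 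The mildly fiddly numerical threshold in this last case is the only real obstacle; everything else is a direct exploitation of the convex-plus-symmetric shape of $V$.
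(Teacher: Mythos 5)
Your proposal is correct and follows essentially the same route as the paper: both arguments rest on the symmetry $V(\alpha R^*)=V(\alpha^{-1}R^*)$ combined with the convex/unimodal shape of $V$ (the paper invokes convexity where you spell out monotonicity on either side of $R^*$, which is equivalent here), and part (iii) is settled by the identical bound $R^\#\leq R^*+\tfrac12<(R^*)^2$ for $R^*\geq\tfrac32$. No gaps.
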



 \section{Applications to Bermudan Option Pricing }\label{secApp}
In this section, we illustrate our algorithm in a number of applications related to a well-known benchmark example \cite{andersen2004primal, BroGla04} from Bermudan option pricing, the valuation of
a Bermudan max-call option in a Black-Scholes model with dividends. There are $d$ stocks with price processes $Y_t^d$ over the continuous time horizon $[0,T]$. Under the
risk-neutral pricing measure, the stocks are independent, identically distributed geometric Brownian motions with volatility $\sigma$ and drift $r-\delta$. Here, $r$ is the risk-free interest rate
and $\delta$ is the dividend yield. Assume that there is a finite, ordered set of exercise dates $t_0, t_1, \ldots, t_J$ in $[0,T]$ and write
\[
X_j = e^{-r\, t_j}\left( \max_d Y_{t_j}^d -K\right)^+ 
.
\]
Thus, $X_j$ is the discounted payoff from exercising a Bermudan max-call option with strike $K$ at time $t_j$. The fair price at time 0 of this option is given by $E[X_{\tau^*}]$ where the 
optimal stopping time $\tau^*$ solves
\[
\sup_{\tau} E[X_{\tau}].
\]
Here, the supremum runs over all stopping times with values in $\{0,\ldots,J\}$. 

This stopping problem is numerically intricate unless the dimension $d$ is small. One popular method for 
identifying confidence intervals for the fair price is the primal-dual approach put forward in \cite{andersen2004primal}. First, one calculates an approximation $\tau$ of $\tau^*$ and estimates
$E[X_\tau]$ which gives a lower bound due to the suboptimality of $\tau$. Afterwards, $\tau$ can also be used to construct high-biased estimators, relying on the dual approach of \cite{haugh2004pricing, rogers2002monte}.
We focus here on the first step of approximating $\tau$ and evaluating the low-biased estimates. 

Note that if we use Monte Carlo methods for both the construction of $\tau$
and the evaluation of $E[X_\tau]$, we need to use independent randomness in the two calculations to preserve the low-biasedness property, see the discussion in \cite{Gl}. Throughout,
we refer to the paths of $Y$ used in the calculation of $\tau$ as training paths and to those in the estimation of $E[X_\tau]$ and related quantities as testing paths.

In our numerical implementations we focus on two well-established methods for calculating the approximately optimal stopping time $\tau$,
the least-squares Monte Carlo algorithm of Tsitsiklis and Van Roy \cite{ TsVr} and the mesh method of Broadie and Glasserman \cite{BroGla04}. In the implementation of the Tsitsiklis-Van Roy method,
we use as basis functions at time $j$ the monomials up to second-order in the stocks $Y_{t_j}^d$ and the payoff from immediate exercise $X_j$. For the mesh method, we follow the implementation 
in \cite{belomestny2013pricing}, including the use of control variates, and omit the details here.

Before we come to the applications, let us briefly discuss these two methods: Arguably, the greatest strength of least-squares Monte Carlo methods is that 
even generic implementations, like the one above, often 
achieve approximations within a few percent of the true value at very low computational costs. 
However, for each fixed choice of basis functions, the methods'
bias can only be reduced down to some fixed level by increasing the number of training paths.
There is typically no practicable, generic method for controlling the precision without substantial tuning of the 
algorithm (the choice of basis functions) and/or massive increases in computational effort. 

We stick here to a fast, ``vanilla'' implementation of the algorithm which, in our view,
has all the advantages of least-squares Monte Carlo. In particular, this implementation is well-suited as an easy-to-evaluate quasi-control variate in Section \ref{secQVC}.\footnote{
For  reasons of numerical performance, we abstain from using European prices as basis functions since these are expensive to evaluate 
in high dimensions. For the same reason, we use the Tsitsiklis-Van Roy method instead of the slightly more popular Longstaff-Schwartz algorithm \cite{LS}. 
When working with two sets of paths, training and testing, the two methods typically lead to almost indistinguishable results, see \cite{B1}. }

In contrast, for the mesh method the bias can be controlled to arbitrary precision by increasing the number of training paths. No ``clever idea'', e.g. a choice of basis functions, is necessary.  
The price to pay for this considerable advantage is that evaluating the stopping times is highly expensive: When using the mesh method, the bulk of the computational effort is taken up by
calculating the realizations of $\tau$ along the simulated paths, since each path has to be compared with all training paths at each point in time.
In fact, the examples of Sections \ref{secQVC} and \ref{secML} can both be interpreted as attempts at increasing the mesh method's applicability by developing generic and
efficient control variates.

 \subsection{Assessing Parameter Uncertainty}\label{PU}
 In our first numerical example, there is a genuine interest in comparing two stopping times. We study the problem of estimating the sensitivity of stopping times estimated by the Tsitsiklis-Van Roy
 method to parameter misspecifications, namely, a misspecified volatility. Denote by $\tau^\sigma$ the stopping time calculated from training paths which have the correct volatility $\sigma$,
while $\tau^{\hat{\sigma}}$ is calculated from training paths with volatility $\hat{\sigma} \neq \sigma$. We wish to estimate the costs of exercising the option based on such a misspecified calculation, i.e.,
\[
\Delta({\hat{\sigma}})=E[X_{\tau^{{\sigma}}}-X_{\tau^{\hat{\sigma}}}]
\]
where the expectation is, of course, taken with respect to the correct model with volatility $\sigma$. 

Following an example in \cite{andersen2004primal, BroGla04} , we assume $10$ exercise dates $t_0,...,t_9=T$ which are equally distributed over the time horizon
$[0,T]$, and work with the following set of parameters: $d=2$, $T=3$, $r=0.05$, $\delta=0.1$, $\sigma=0.2$ $K=100$, $Y_0^d=90$. We use 100000 training paths of the underlying
Brownian motion for calculating the stopping
times and keep these fixed throughout.
Table \ref{tab:t33} reports estimates of the parameters $\rho_i$ and $v_i$  for different values of the misspecified volatility $\hat{\sigma}$.

\begin{table}
\renewcommand{\arraystretch}{1.4}
\centering
\begin{tabular}{|c|rrrr|}
\hline
 $\hat{\sigma} - \sigma $& 0.005& 0.01 & 0.015&  0.02 \\
 \hline
$\Delta(\hat{\sigma}) $ & 0.011  & 0.026&0.043 &0.066   \\
$E\left[ X_{\tau^{{\sigma}}} \right]$ & 8.042 & 8.042&8.042 & 8.042  \\
$E\left[ X_{\tau^{{\hat{\sigma}}}} \right]$ & 8.031 &8.016& 7.999 &7.976   \\
\hline
$P(\tau^\sigma \neq \tau^{\hat{\sigma}})$& 0.022& 0.043&0.062 &0.081   \\
$\rho_1$& 7.975 & 7.974& 7.972&7.972  \\
$\rho_2$& 0.053& 0.104&0.154 &0.199   \\
$v_1$&  0.008 &0.020& 0.037&0.061  \\
$v_2$& 4.023  &8.016& 12.053&16.066  \\
$R^*$& 271.8  & 176.0& 129.4 &103.3   \\
$\gamma^*$& 0.016 & 0.026& 0.037 & 0.047  \\
\hline
speed-up &  62.5& 38.5 & 27.0 &21.3   \\
\hline
\end{tabular}

\caption{Estimated simulation parameters for different values of $\hat{\sigma}$.
The speed-up $1/\gamma^*$  from using Nested CMC is given in the last row.}\label{tab:t33}
\end{table}

The first thing to observe from the table is that in all four cases Nested CMC leads to a substantial variance reduction, varying between a factor of about 60 and about 20, with the largest
gains if $\sigma$ and $\hat{\sigma}$ are most similar. The ratio between $v_1$ and $v_2$ is fairly constant and (much) smaller than the ratio between $\rho_2$ and $\rho_1$
which is thus decisive for the achieved variance reduction. We also report the probability that the two stopping times differ -- so that the subsimulations actually have to be carried out --
and find that it lies between 2\% and 8\%. These numbers are one key reason for the small values of $\rho_2$ and the high optimal numbers of subsamples (between 103 and 272). 

The units in which we report the $\rho_i$ are irrelevant (only the ratios matter) and chosen deliberately (but fixed across the table). More importantly,
it should be emphasized that the exact values of these numbers inevitably vary across different numerical implementations of the method.
In the above table, we tried to estimate these parameters as accurately as possible. Yet, as was shown in Section \ref{secCali}, this is not necessary -- rough estimates are sufficient  in practice.

\subsection{Improved Quasi-Control Variates}\label{secQVC}

In this section and the next, we turn to the more classical problem of calculating $E[X_{\tau^A}]$ for a given stopping time $\tau^A$. Introduce a second stopping time $\tau^B$
and write
\[
E[X_{\tau^A}]=E[X_{\tau^B}] + E[X_{\tau^A} - X_{\tau^B}].
\]
In a classical control variate approach, one would choose $\tau^B$ such that the first expected value on the right hand side can be calculated explicitly and would then estimate only
the second one by Monte Carlo. A popular example is the choice $\tau^B=J$ which corresponds to using European options as control variates.\footnote{There is a slight subtlety here, depending on whether
one uses $X_J$ or $E[X_J|\mathcal{F}_{\tau^A}]$ as a control variate, the European payoff or the European price. 
The second, superior choice can be understood as a conditional Monte Carlo estimator with $R=\infty$. This yields, however, typically a larger value of $v_1$ than the control variates we consider 
since $\tau^B=J$ is not necessarily a good approximation of $\tau^*$. Moreover, European prices are not always available in closed form.}
In a more abstract setting, Emsermann and Simon \cite{ES} pointed out that 
it can sometimes be beneficial to work with so-called quasi-control variates, i.e.,  a control variate whose expected value is not known explicitly.\footnote{We are unaware of previous applications of Quasi-Control Variates in Bermudan/American option pricing. As the present example shows, they are reasonably generic and
powerful variance reduction techniques for this important problem.} 
This is the case if $X_{\tau^B}$ is significantly 
cheaper to simulate than $X_{\tau^A}$. In that case,  the first summand can be  estimated with many (cheap) simulations of $X_{\tau^B}$. For the second summand, only a small number of (expensive) paths
may be necessary due to the variance reduction effect of $X_{\tau^B}$.
Finally, note that the second expectation is exactly the type of term which can be estimated efficiently by Nested CMC. Thus, we can hope
to significantly enhance quasi-control variates by our method. 

We retain the numerical example of the previous section, except that we increase the dimension to $d=3$. We choose $\tau^A$ as an approximate optimal stopping time calculated by the mesh method
with control variate from \cite{belomestny2013pricing}
with $2500$ training paths. As $\tau^B$ we choose a Tsitsiklis-Van Roy stopping time with $100,000$ training paths. In Table \ref{t:br} we state estimates of the expected values of $\mu^A=E[X_{\tau^A}]$, 
the variance $v^A=\textnormal{Var}(X_{\tau^A})$ and the cost $\rho^A$ for generating a sample of $X_{\tau^A}$, as well as the corresponding quantities for $\tau^B$. As expected, $v^A$ and 
$v^B$ are similar, but $\rho^A$ is by a factor \textit{three thousand} larger than $\rho^B$. Note also that $\mu^A$ is considerably larger than $\mu^B$. Since both estimates have a downward bias, this reflects
the greater accuracy of the mesh method. In particular, $\mu^A$ lies within the $95\%$ confidence interval $[11.265, 11.308]$ for $E[X_{\tau^*}]$ from \cite{andersen2004primal}.

Denote by $\rho(R)$ and $v(R)$ the computational costs and the variance per testing path when estimating $E[X_{\tau^A} - X_{\tau^B}]$ by Nested CMC with $R$ replications, i.e.,
\[
\rho(R)= \rho_1 +R\,\rho_2 \;\;\; \text{and}   \;\;\; v(R)= v_1+ \frac{v_2}{R}
\]
where the $\rho_i$ and $v_i$ are defined exactly as in Section \ref{secAlgo}. 
Let $N^B$ be the number of paths used to estimate $\mu^B$ and let $N$ be, as before, the number of
paths used in the estimation of $\mu^A-\mu^B$. For a given computational budget $C$ and fixed $R$, the optimal choice of $N^B$ and $N$ is given as the solution of
\[
\min_{N^B,\,N}\;\; \frac{v^B}{N^B}+ \frac{v(R)}{N}\;\;\;\text{s.t.}\;\; \rho^B\, N^B + \rho(R) N \leq C.
\]
By a calculation similar to those in \cite{ES} (or the proof of our Proposition \ref{propRstern}) it follows that the optimal ratio between $N$ and $N^B$ is given by
\begin{equation}\label{pathsQCV}
\frac{N^B}{N}=   \sqrt{\frac{v^B }{v(R) } \, \frac{\rho(R)}{\rho^B}}.
\end{equation}
regardless of the size of the computational budget.
Finally, observe that the optimal value $R^*$ of $R$ is the same as in Section \ref{secCali} regardless of how we 
allocate computational effort between the estimations of $\mu^B$ and $\mu^A-\mu^B$. From the values of the $v_i$ and $\rho_i$ we note that $R^*=97.037$ and the estimated gain $\gamma^*$ in the simulation
of $\mu^A-\mu^B$ is given by $\gamma^*=0.067$, corresponding to a speed-up of almost fifteen times in this part of the estimation.

\begin{table}
    \renewcommand{\arraystretch}{1.5}
\begin{center}
\begin{tabular}{|c|c|c|c|c|c|c|c|c|c|}
\hline
$\mu^A$ & $v^A$ & $\rho^A$ & $\mu^B$ & $v^B$ & $\rho^B$ & $v_1$ & $\rho_1$ & $v_2$ & $\rho_2$\\
\hline
 11.276&182&37.92&11.224&206&0.0124&0.044&
 36.23
 &19.536&
 1.728\\
\hline
\end{tabular}
\end{center}
\caption{Estimated simulation parameters.}\label{t:br}
\end{table}
Table \ref{t:alledrei} compares the performance of three Monte Carlo estimators for $\mu^A$ which have (approximately) the same computational costs and with parameters guided by the above considerations.
The first line gives the variance of a direct Monte Carlo
estimator of $\mu^A$ with $3,150$ sample paths. The second line shows the variance of a simple quasi-control variate estimator ($R=1$) with $N^B=536,178$ paths in the estimation of
$\mu^B$ and $N=2,989$ paths in the estimation of $\mu^A-\mu^B$. The third line shows the variance of a quasi-control variate estimator with $R=100$ replications in each of the 
$N=468$ paths in the estimation of $\mu^A-\mu^B$ and $N^B=1,784,813$ paths in the estimation of $\mu^B$. We thus see an improvement of more than a factor 100, which comes in equal parts from 
the quasi-control variate and from including nested simulations.

\renewcommand{\arraystretch}{1.5}
\begin{table}
   
\begin{center}
\begin{tabular}{|c|c|c|}
\hline
Method & Variance& Running time\\
\hline
\hline
Simple Monte Carlo & $60.4\times 10^{-3}$ & $125s$ \\
\hline
Quasi-Control Variate & $6.77\times 10^{-3}$  &$121s$\\   
\hline
Quasi-Control Variate with Nested CMC&  $0.514\times10^{-3}$ &$105s$\\ 
\hline
\end{tabular}
\end{center}
\caption{Comparison of the three methods with similar running times. This table reports averages over hundred runs of the simulation implemented in C++ on a standard system with a 2.6 GHz AMD processor.} \label{t:alledrei}
\end{table}
 
In the present example, the ratio between $v_1$ and $v_2$ is far more favorable than the ratio between $\rho_1$ and $\rho_2$, implying that the latter ratio governs the variance  
 reduction we achieve. This is due to the relatively high value of $\rho_1$ which arises since in about 60\% of cases it is the cheap Tsitsiklis-Van Roy stopping time which stops first.
 One can construct an even more efficient quasi-control variate by modifying the Tsitsiklis-Van Roy stopping time to be \textit{slightly} biased towards late stopping, thus increasing the
 variance $v_1$ but decreasing $\rho_2$. This can be achieved, e.g., by adding a small constant to the estimated continuation values.

 \subsection{An Improved Multilevel Algorithm}\label{secML}
 
Multilevel Monte Carlo methods, initiated by Heinrich \cite{heinrich2001multilevel} and Giles \cite{Gi}, can easily be understood as an extension of the quasi-control variate approach: 
Instead of a single quasi-control variate, there is a sequence of random variables, where each element in the sequence serves as a quasi-control variate to its successor. 
In a recent paper, Belomestny, Dickmann and Nagapetyan \cite{belomestny2013pricing} introduced and analyzed a multilevel method for our problem of numerically evaluating approximations of $E[X_{\tau^*}]$.
  
The algorithm of \cite{belomestny2013pricing} can be summarized as follows: Fix a number of levels $L$ and an increasing sequence $k_0, \ldots, k_L$ and consider the sequence of stopping times $(\tau(k_i))_i$ which are approximations of $\tau^*$ calculated
by the mesh method with $k_i$ training paths. Of these stopping times, $ \tau(k_L)$ is both, the most accurate and the most expensive. Thus, in order to estimate $E[X_{\tau(k_L)}]$ we write
\begin{equation}\label{ML}
E[X_{\tau(k_L)}] = E[X_{\tau(k_0)}] + \sum_{i=1}^L E[X_{\tau(k_{i})}-X_{\tau(k_{i-1})}]
\end{equation}
In the multilevel method of  \cite{belomestny2013pricing}, each of the summands on the right hand side is estimated independently by Monte Carlo with $N_i$ testing paths, where 
the sequence $N_0$, \ldots $N_L$ is \textit{decreasing}. The motivation for the algorithm is as follows:
$\tau(k_0)$ is cheap to evaluate and $E[X_{\tau(k_0)}]$ can thus be estimated with many testing paths. The summands in \eqref{ML} become more expensive as $i$ increases. Yet since their
contribution to the overall estimate is comparatively small, one can afford to estimate them with fewer sample paths.

For this method (under a suitable choice of parameters), it was shown in \cite{belomestny2013pricing} that  the overall computational effort (training and testing) required for a mean squared error of $\varepsilon$
behaves like $\varepsilon^{-2.5}$ as $\varepsilon$ gets small. This is a significant improvement over a simple Monte Carlo method with   $\varepsilon^{-3}$. In the following, we demonstrate
that applying Nested Conditional Monte Carlo in the estimation of $E[X_{\tau(k_{i})}-X_{\tau(k_{i-1})}]$ can lead to a substantial speed-up in the non-asymptotic regime.
For an entirely different application, the idea of using splitting techniques to speed-up multilevel Monte Carlo is mentioned as a possibility
already in \cite{Gi}. We are, however, unaware of later research which followed this suggestion. 

We retain the numerical example of the two previous sections but increase the dimension to $d=5$. We work with two levels, $L=2$, and $(k_0,k_1,k_2)=(100,1000,10000)$. The three stopping times
$\tau(k_i)$ are calculated by the mesh method with European control variate, specified exactly as in \cite{belomestny2013pricing}. Besides the number of training paths $k_i$, we also
increase the approximation quality of the numerical integration in the European control variate across levels, choosing precision parameters $(u_0,u_1,u_2)=(0.5,0.05,0.005)$, see \cite{belomestny2013pricing} for details. 
The training paths in the construction of $\tau(k_0)$ and $\tau(k_1)$ are subsets of the training paths for the true target stopping time $\tau(k_2)$, so that the additional effort 
from working with three instead of one stopping times is negligible at the training stage.
In light of \eqref{ML}, we can apply our Nested CMC twice, and obtain two sets of parameters $\rho_i$ and $v_i$ which are summarized in Table  \ref{tab:mlpar}.

\begin{table}
\renewcommand{\arraystretch}{1.4}
\centering
\begin{tabular}{|l|rr|}
\hline
 Level $i$ &1&2\\
\hline
$E[X_{\tau(k_{i})}-X_{\tau(k_{i-1})}]$ & 0.886& 0.026		 \\
$\rho_1$& 9.8& 111.3 \\
$\rho_2$ &1.4 & 1.8  \\
$v_1$ &  2.292& 0.037 \\
$v_2$& 55.429&14.485  \\
$\gamma^*$ &0.28& 0.031\\
$R^*$&   13.018 & 158.707\\
\hline
\end{tabular}
\caption{
Simulation parameters at the two levels.
 }\label{tab:mlpar}
\end{table}

We thus see, that Nested CMC leads to drastic speed-up of about a factor 32 at the high-precision level $i=2$, and to a still decent one of about 3.5 at the intermediate level $i=1$.\footnote{This effect would become more pronounced 
if we included further levels of higher precision: These have an even greater speed-up factor from using Nested CMC.}
At the ``base level'', i.e., the calculation of  $E[X_{\tau(k_0)}]=15.698$ we have a variance of $\textnormal{Var}(X_{\tau(k_0)})= 251.3$ and a cost per sample which we normalize to 1.  
Following \eqref{ML},  the expected value we are calculating is thus 
\[
E[X_{\tau(k_L)}] = 15.698+0.886+0.026 = 16.610
\]
which is well within the confidence interval $[16.60, 16.66]$ for $E[X_{\tau^*}]$ from  \cite{andersen2004primal} for this example. In determining the optimal number of testing paths for 
estimating each summand in $\eqref{ML}$, we use the following generalization of \eqref{pathsQCV} found, e.g. in \cite{Gi,belomestny2013pricing}: For a fixed computational budget,
the \textit{squared} number of paths $N_i^2$ in the estimation of each summand should be proportional to the variance divided by the costs per sample for this summand.

\begin{table}
\renewcommand{\arraystretch}{1.4}
\centering
\begin{tabular}{|c|ccc|c|}
\hline
Method  &  & & & Variance\\
\hline
      & &$N$  & & \\
Simple Monte Carlo & &1790  & &0.13  \\  
\hline
 & $N_0$& $N_1$ & $N_2$ &  \\
 Multilevel & 38760 & 5550& 880 & 0.033\\
 Multilevel with Nested CMC & 86780 & 2650 & 100& 0.0067 \\
\hline
\end{tabular}
\caption{Overall expected variances of the three methods with identical expected computational costs. 
}\label{tab:mlvar}
\end{table}
Table \ref{tab:mlvar} compares Multilevel Monte Carlo with and without nesting for a fixed expected computational budget of 200000 time units.
As suggested by Table \ref{tab:mlpar}, we use $R=13$ and $R=159$ replications
in the Nested CMC algorithms at the two levels.

We also present results for a simple Monte Carlo estimator of the same expected value, $E[X_{\tau(k_2)}]$, under the same budget. Simple Monte Carlo has a cost per sample of 112.9 and $\textnormal{Var}(X_{\tau(k_2)})=234.1$. 
There is a variance reduction by a factor 19.6 between simple Monte Carlo and Multilevel Monte Carlo with Nested CMC, the larger part of which (a factor 5) comes from 
incorporating the nested simulations. 

\section{Conclusion} \label{secCon}

In this paper, we have introduced Nested Conditional Monte Carlo, a simple Monte Carlo method for estimating the difference between two stopped versions of the same stochastic process. 
The algorithm is easy to calibrate by estimating two variance and two running-time parameters. Moreover, rough parameter estimates provably suffice for a near-optimal performance of the method.
We demonstrated that besides its direct applications our method can be used as a generic tool for enhancing variance reduction methods for stopping problems.
In fact, we can see little reason why one should implement the Quasi-Control Variate or Multilevel Monte Carlo methods of Section \ref{secApp} without including subsimulations.
The resulting variance reduction methods are very efficient and, unlike classical control variates, do not require that anything can be computed explicitly.

In terms of applications, we have focused on Bermudan option pricing but many other fields of application are conceivable. As examples, consider credit risk modelling, where events of default and distress 
are often modelled by stopping times, or pricing heuristics in revenue management as in \cite{feng1995optimal} where the near-optimal timing of sales and promotions is studied.

In our algorithm, we use the same number of replications on each path. Yet, as demonstrated in a different type of application in \cite{broadie2011efficient}, 
numerical efficiency can be improved by allocating more replications to ``critical'' trajectories. An extension of our method which achieves this -- while retaining unbiasedness --
splits the trajectories at \textit{every} time point between $\tau^\wedge$ and $\tau^{\vee}$. In this way, trajectories with a large value of $\tau^\vee-\tau^\wedge$
are automatically investigated more intensively. Alternatively -- if one is not concerned about a small bias -- combinations of our method with Importance Sampling might be fruitful.
We leave these and further extensions and applications of our method to future research.

 \appendix
\section{Proofs} 
 
 \begin{proof}[Proof of Proposition \ref{propUB}]
 To see the unbiasedness, note that
 \begin{eqnarray*}
 E[\Delta^{(N,R)}]&=&\frac{1}{N} \sum_{i=1}^N \frac{1}{R} \sum_{r=1}^R E[S^{(i)} (X^{(i,r)}_{\tau^{\vee,(i,r)}} - X^{(i)}_{\tau^{\wedge,(i)}})]\\
 &=& \frac{1}{N} \sum_{i=1}^N \frac{1}{R} \sum_{r=1}^R E[X_{\tau^A}-X_{\tau^B}]=\Delta
 \end{eqnarray*}
  where the second equality simply used that the term inside the expectation is an independent copy of $X_{\tau^A}-X_{\tau^B}$. For the variance, note first that the outer sum over $i$
  is a sum of independent, identically distributed random variables and thus
 \begin{eqnarray*}
 \textnormal{Var}(\Delta^{(N,R)}) =\frac{1}{N} \textnormal{Var}\left( \frac{1}{R} \sum_{r=1}^R  S^{(1)} (X^{(1,r)}_{\tau^{\vee,(1,r)}} - X^{(1)}_{\tau^{\wedge,(1)}}) \right).
  \end{eqnarray*}
  Applying the conditional variance decomposition formula yields  \[\textnormal{Var}(\Delta^{(N,R)})=\frac{v_1}{N} + \frac{v_2}{R N}\] with 
  \[
  v_1= \textnormal{Var}\left(E\left[\left. \frac{1}{R} \sum_{r=1}^R  S^{(1)} (X^{(1,r)}_{\tau^{\vee,(1,r)}} - X^{(1)}_{\tau^{\wedge,(1)}})\right| \mathcal{F}^{{\tau^{\wedge},(1)}}   \right] \right)
  \]
  and
  \[
  v_2= R\cdot E\left[ \textnormal{Var}\left(\left. \frac{1}{R} \sum_{r=1}^R  S^{(1)} (X^{(1,r)}_{\tau^{\vee,(1,r)}} - X^{(1)}_{\tau^{\wedge,(1)}})\right| \mathcal{F}^{{\tau^{\wedge},(1)}}  \right)\right]. 
  \]
  and it remains to see that these values of $v_1$ and $v_2$ coincide with those in the proposition. 
  Note that the summands are independent and identically distributed conditionally on  $\mathcal{F}^{{\tau^{\wedge,(1)}}}$. For $v_1$ this implies that 
  $$E\left[\left.  S^{(1)} (X^{(1,r)}_{\tau^{\vee,(1,r)}} - X^{(1)}_{\tau^{\wedge,(1)}})\right| \mathcal{F}^{{\tau^{\wedge,(1)}}}  \right]$$ does not depend on $r$ and thus
   \[
  v_1= \textnormal{Var}\left(E\left[\left.    S^{(1)} (X^{(1,1)}_{\tau^{\vee,(1,1)}} - X^{(1)}_{\tau^{\wedge,(1)}})\right| \mathcal{F}^{{\tau^{\wedge},(1)}}  \right] \right).
  \]
 A similar argument for $v_2$ now yields
  \[
  v_2= E\left[ \textnormal{Var}\left(\left.  S^{(1)} (X^{(1,1)}_{\tau^{\vee,(1,1)}} - X^{(1)}_{\tau^{\wedge,(1)}})\right| \mathcal{F}^{{\tau^{\wedge},(1)}}  \right)\right].
  \]
  Noting that $S^{(1)} (X^{(1,1)}_{\tau^{\vee,(1,1)}} - X^{(1)}_{\tau^{\wedge,(1)}})$ and $\mathcal{F}^{{\tau^{\wedge,(1)}}}$ are copies of $X_{\tau^A}-X_{\tau^B}$ and $\mathcal{F}_{\tau^\wedge}$
  allows to conclude the proof. Finally, let us emphasize that the above argument does take into account the fact that on \textit{some} trajectories -- those trajectories where $\tau^A$ and $\tau^B$ coincide --
  $X_{\tau^A}-X_{\tau^B}$ is $\mathcal{F}_{\tau^\wedge}$ measurable.
  \end{proof}

  \begin{proof}[Proof of Proposition \ref{propRstern}]
  Since the objective function decreases in both $R$ and $N$, it is clear that the budget constraint holds with equality at the optimum, $c(N,R)=C$. Solving the constraint for $N$ and substituting
the result  into the objective yields
\[
\min_{R }  \frac{1}{C} \left(  v_1 (\rho_1 +\rho_2 R )+ v_2 \frac{\rho_1 +\rho_2 R }{R}\right)\;\;\; \textnormal{s.t.}\;\;\; R \geq 1.
\]
Using that the minimization problem is invariant to monotone transformations, we can write this as
\[
\min_{R } \frac{v_1 \rho_2}{v_2 \rho_1} R +   \frac{1}{R} \;\;\; \textnormal{s.t.}\;\;\; R \geq 1.
\]
Clearly, the solution to this convex minimization problem is $R^*=\max(1, R')$ where $R'$ is the solution of the associated unconstrained minimization problem which is given by
$
R' = \sqrt{\frac{v_2 \rho_1}{v_1 \rho_2} } .
$
\end{proof}

 \begin{proof}[Proof of Proposition \ref{propgain}]
The formula for $\gamma^*$ follows with a few algebraic manipulations after substituting $R^* = \sqrt{\frac{v_2 \rho_1}{v_1 \rho_2} }$ into $V$. We turn to the lower bound. 
By symmetry, it suffices to prove that for all positive real numbers $a$ and $b$ with $ab\leq 1$
\[
\frac{(a+b)^2}{(1+a^2)(1+b^2)} \geq \frac{a^2}{1+a^2}.
\] 
To see this, note that we can bound the numerator as follows
\[
(a+b)^2 \geq a^2+ ab  \geq a^2+  a^2b^2 =a^2(1+b^2). 
\] 
For the upper bound it suffices to observe that
\begin{eqnarray*}
 \frac{(a+b)^2}{(1+a^2)(1+b^2)} &\leq& \frac{2 a^2}{(1+a^2)(1+b^2)}+\frac{2 b^2}{(1+a^2)(1+b^2)}\\
 &\leq& \frac{2 a^2}{1+a^2}+\frac{2 b^2}{1+b^2}\\
 &\leq& 4 \max\left( \frac{a^2}{1+a^2}, \frac{b^2}{1+b^2}\right).
\end{eqnarray*}
where we used in the first step that $(a+b)^2 \leq 2(a^2+b^2)$.
\end{proof}

 \begin{proof}[Proof of Proposition \ref{propAlpha}]
  Observe first that we can write 
  \[
  V(R)=\rho_1 v_1+\rho_2 v_2+\rho_2 v_1 \left(R+\frac{{R^*}^2}{R} \right)
  \]
  and
  \[
  V(\alpha R^*)=\rho_1 v_1+\rho_2 v_2+\rho_2 v_1 (\alpha + \alpha^{-1}) R^*.
  \]
 Therefore we have $V(\alpha R^*)=V(\alpha^{-1}R^*)$ and by convexity $V(R)\leq V(\alpha R^*)$ for all $R$ in the interval. It thus suffices to prove the upper bound for 
 \[
 \frac{V(\alpha R^*)}{V(R^*)}= \frac{\Gamma + \alpha + \alpha^{-1}}{\Gamma + 2}\;\;\;\textnormal{ where }\;\;\; \Gamma =\frac{\rho_1 v_1+\rho_2 v_2}{\rho_2 v_1 R^*}.
 \]
 Since $\alpha +\alpha^{-1} \geq 2$, we can bound this expression from above by replacing $\Gamma$ with a smaller number. In particular, $\Gamma \geq 2$ yields the desired inequality
 \[
 \frac{V(\alpha R^*)}{V(R^*)}\leq  \frac{2 + \alpha + \alpha^{-1}}{4}.
 \]
 To see that we indeed have $\Gamma \geq 2$,
 note that by inserting the expression for $R^*$ we can write
 \[
 \Gamma = \sqrt{\frac{\rho_2 v_2}{\rho_1 v_1}}+\sqrt{\frac{\rho_1 v_1}{\rho_2 v_2}}.
 \]
 $\Gamma \geq 2$ now follows from the fact that $x+x^{-1} \geq 2$ for all $x \geq 0.$
  \end{proof}

 \begin{proof}[Proof of Corollary \ref{propR}]
  In the proof of Proposition \ref{propAlpha} we saw that $V(\alpha R^*)=V(\alpha^{-1} R^*)$. For $\alpha=R^*$ this gives $V({R^*}^2)=V(1)$. Thus, (i) follows from the convexity of $V$.
  The argument for (ii) is similar.
  For (iii) note first that if $R^*<1.5$ we have $R^\#=1$ and nothing is to prove. 
  By (i) it thus suffices to show $R^\# < {R^*}^2$ for $R^* \geq 1.5$. To see this, note that
  $
  R^\# \leq R^*+\frac{1}{2} < {R^*}^2
  $
  where the last inequality holds for all $R^* > \frac{1+\sqrt{3}}{2} \approx 1.37$.
 \end{proof}

\bibliographystyle{plain}
\bibliography{branching2.bib}

\end{document}